\newtheorem{theorem}{Theorem}[section]
\newtheorem{corollary}{Corollary}[theorem]
\newcommand{\hush}[1]{}
\begin{document}
\title{Enhancing quantum computer performance via symmetrization}
\author{Andrii Maksymov}
\affiliation{IonQ, College Park, MD 20740, USA}
\author{Jason Nguyen}
\affiliation{IonQ, College Park, MD 20740, USA}
\author{Yunseong Nam}
\affiliation{IonQ, College Park, MD 20740, USA}
\affiliation{Department of Physics, University of Maryland, College Park, MD 20742, USA}
\author{Igor Markov}
\affiliation{IonQ, College Park, MD 20740, USA}
\date{\today}

\begin{abstract}
Large quantum computers promise to solve some critical problems not solvable otherwise.
However, modern quantum technologies suffer various imperfections such as control errors and qubit decoherence, inhibiting their potential utility.
The overheads of quantum error correction are too great for near-term quantum computers, whereas error-mitigation strategies that address specific device imperfections may lose relevance as devices improve.
To enhance the performance of quantum computers with high-quality qubits, we introduce a strategy based on \emph{symmetrization} and nonlinear aggregation. On a commercial trapped-ion quantum computer, it improves performance of multiple practical algorithms by 100x with no qubit or gate overhead.
\end{abstract}

\maketitle

\section{Introduction}

Quantum computers (QCs) are rapidly growing in capacity, but are held back by quantum noise, decoherence, crosstalk and gate control inaccuracies~\cite{Erhard2019,Wright_2019,Cian2021CrossPlatformCO,SupermarQ2022}. 
Each qubit technology seeks to suppress such irregularities for individual qubits and gates~\cite{Maksymov:2021brv,blumel2019poweroptimal,blumel2021efficient,Ballance2016,Elder2020,isca2021}. However, the circuit fidelity provided by these methods falls short by orders of magnitude compared to the needs of large-scale quantum algorithms.
This necessitates the development of \emph{higher-level} strategies that systematically improve performance as observed at the algorithmic level, and we offer such techniques in our work.
As in conventional computers, firmware in quantum computers  
provides necessary low-level control for a device's specific hardware and orchestrates hardware so that software can run more effectively and efficiently. 
Firmware can implement quantum error correcting codes (QECC) that mathematically promise to tolerate small-enough irregularities via wide-circuit redundancy. 
However, for near-term quantum computers, irregularities are often too great for these codes to function properly~\cite{NC}. 
Leading QECC techniques require many additional qubits, gates, measurements, low-latency classical-control interconnects, and exorbitant amounts of supporting nonquantum computation. 
Although eventually QECC promises attractive scalibility, present-day quantum computers are far too small to benefit from QECC and wide-circuit redundancy~\footnote{The overhead for QECC is typically 5-7x the number of qubits. For NISQ system with 50-70 qubits this leaves very few logical qubits for quantum computation.}.

To make progress with present-day QCs, researchers have developed alternate firmware approaches known as \emph{error mitigation}. In leading superconducting QCs, the quality of individual physical qubits varies enough for the result to depend on the mapping of logical to physical qubits.
Therefore, researchers try to optimally map qubits~\cite{murali2019noiseadaptive,10.1145/3297858.3304007,Maksymov_2021}, order gates~\cite{10.1145/3386162}, and ensemble-average over circuit mappings to mitigate the effect of correlated errors with minimal overhead~\cite{10.1145/3352460.3358257}.
A series of techniques is based on first accurately characterizing quantum device irregularities and errors, then suppressing them by adjusting control pulses~\cite{Sun_2021}, probabilistically canceling them via applying extra gates~\cite{strikis2021learningbased,suzuki2021quantum,Temme_2017,piveteau2021error}, or using machine learning on the quantum computational output~\cite{strikis2021learningbased,czarnik2021error}. 
Another insight is that decorrelated noise accumulates at a smaller rate with the number of gates. Hence, \emph{gate-level decorrelation}~\cite{Campbell_2017,Kern_2005,Wallman2016,Cai_2020,hashim2020randomized} adds gates to decorrelate noise at the cost of some overhead, which can also add to the noise if significant. 
Researchers have used this effect to systematically \emph{amplify} noise, which allows one to extrapolate output states to the zero-noise limit~\cite{cai2021multiexponential,Temme_2017,Kandala_2019,Endo_2018,Songeaaw5686}.

Leading error-mitigation strategies developed and deployed for superconducting QCs address stochastic noise and uneven quality of physical qubits. To improve, superconduting QCs must attain uniformly-high qubit quality and low stochastic noise. After such improvements, the error-mitigation techniques we reviewed above may lose relevance.
Such improved technologies can be illustrated by the present-day trapped-ion QCs where practically-identical qubits enjoy long decoherence times and low random noise~\cite{Wang_2021,PhysRevA.94.042308,Kielpinski2002}.
The remaining adverse effects are due to slowly-drifting control inaccuracies~\cite{Maksymov:2021brv}.
In this work, we develop and validate novel error mitigation techniques for ion-trap QCs with expectation of broader applicability to present-day and future QCs. 

We introduce a firmware-level error mitigation strategy called \emph{symmetrization}.
To avoid qubit- and gate-level overhead, it distinguishes the ideal quantum computation by its invariance under certain symmetries that arise at multiple levels of physical implementation~\footnote{Such as qubit mappings, circuit compilation, gate decomposition, pulse sequences etc.}.
Our strategy first uses symmetries to generate variant circuit implementations.
These variants run on one or multiple QCs, and collected measurement statistics are aggregated via linear or nonlinear techniques.
Subsequently, symmetrized effects of deterministic inaccuracies largely cancel out while random noise does not get amplified.

We validated our strategy on the IonQ Aria commercial QC for quantum algorithms of practical interest~\cite{qedcArxiv,Daiwei_QML,IonQ_Q1_2022,IonQ_AQ20_2022}.
For quantum ML (QML) circuits~\cite{Daiwei_QML}, linear aggregation gives a 1.5-2$\times$ performance boost.
Nonlinear aggregation by voting provides much greater gains but may distort results if used inappropriately.
For a 15-qubit quantum Fourier transform (QFT) adder circuit~\cite{qedcArxiv} with voting, we see a 100$\times$ performance gain without distortion.
We explore the choice of aggregation in Section~\ref{sec:error-analysis} and provide a guide for future uses in Discussions.

\section{Results}
\subsection{Symmetrization strategy}

\begin{figure}
\includegraphics[scale=0.46]{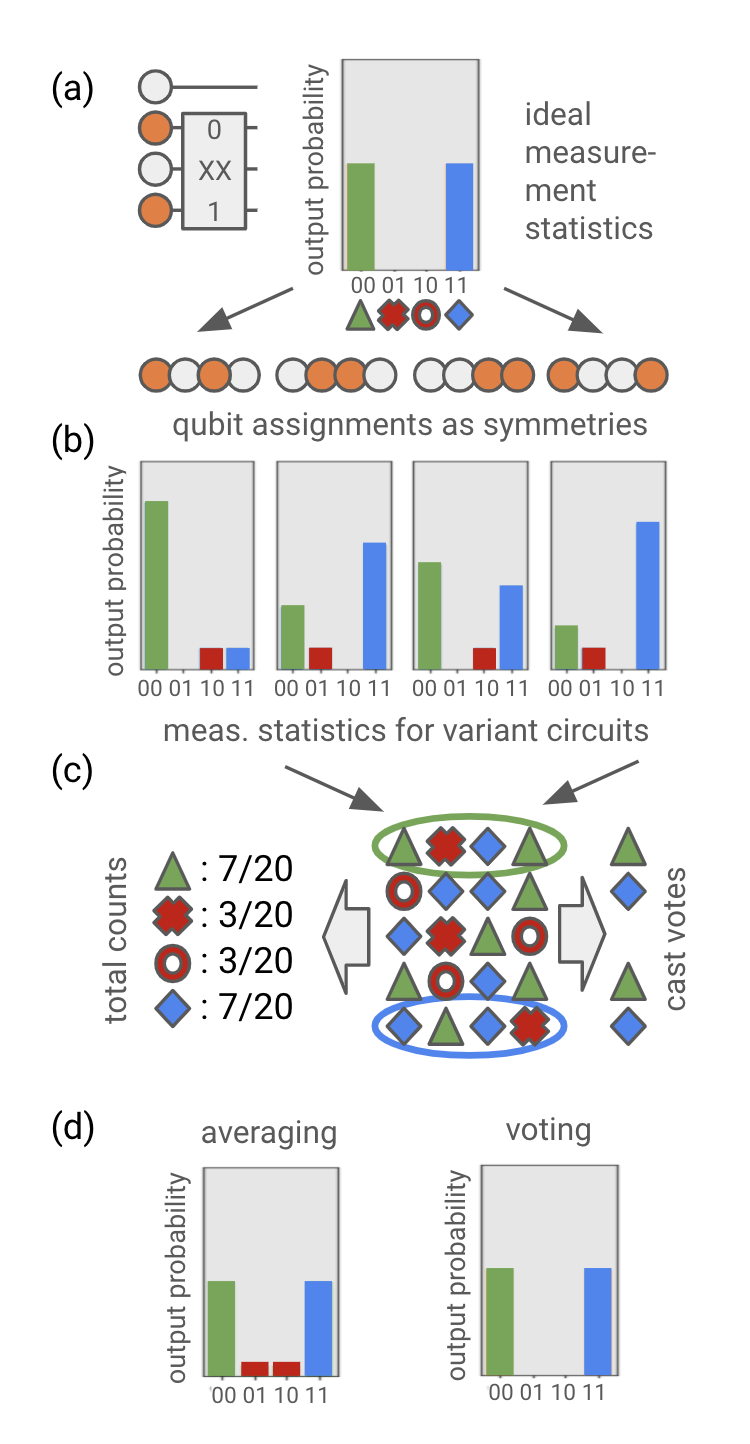}
\caption{\label{fig:flow} Symmetrized circuit execution: (a) splitting execution into symmetrized variants illustrated by varying qubit assignments, (b) measuring each result affected by individual inaccuracies, (c) aggregating measurement statistics while (d) compares the difference between averaged results and obtained through plurality voting. First, for
each of four selected qubit pairs, a circuit variant produces a superposition state $(\left|00\right>+\left|11\right>)/\sqrt{2}$  (target qubits are marked orange) (a). For each qubit pair, the output state is
recreated and measured five times in the computational basis (b). If the measurements are grouped per mapping, their statistics significantly deviate from the ideal, but approach the ideal when averaged; residual erroneous counts are shown as red circles and crosses, while all-zero states are green triangles, all-one are blue diamonds (c). When aggregated with plurality vote taken across variants, erroneous counts are filtered out, whereas componentwise averaging preserves all counts (d).}
\end{figure}

\begin{figure}
\includegraphics[scale=0.41]{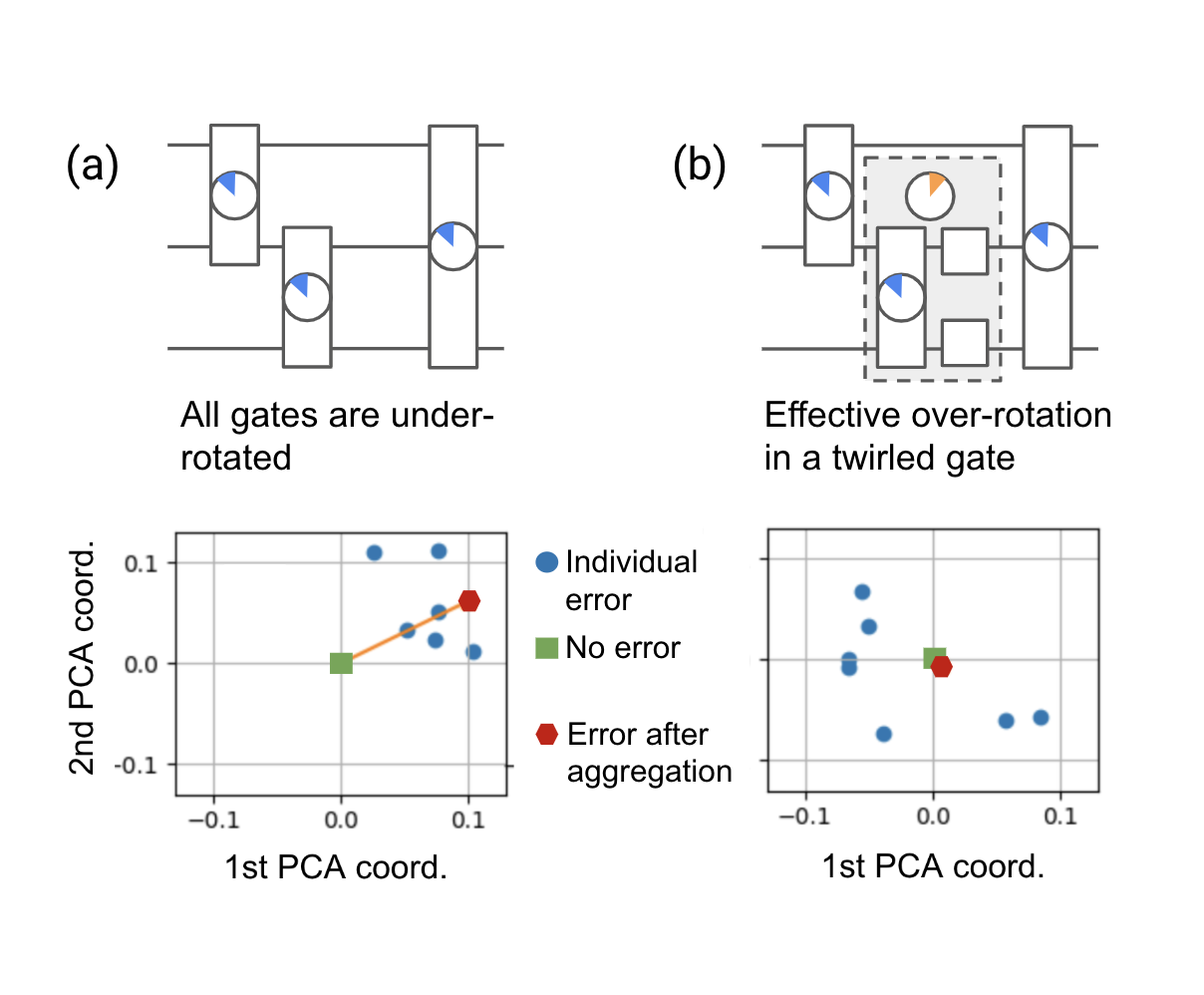}
\caption{\label{fig:pca}
Simulation results for symmetrized 4-qubit circuits (on eight ions) with average under-rotation on all qubit pairs. Panel (a) illustrates the use of qubit remappings as symmetries, while panel (b) shows combined use of qubit remapping and gate twirling. At the top, we contrast similar quantum circuit blocks. Circles with blue sectors mark gates with under-rotation while orange sectors mark gates with over-rotation. At the bottom, we plot the errors after aggregation in the space of the first two principal components of the deviation from the ideal histogram for biased (a) or symmetrized (b) inaccuracies. Individual output errors are shown for simulations of eight mappings of a random circuit on eight qubits with under-rotations specified per two-qubit gate.
}
\end{figure}

We consider a set of $n$-qubit computations $\mathfrak{U}_n$, each including state initialization, some operator from $SU(2^n)$, and final measurements. 
Let $\mathfrak{R}_n$ be a set of realizations (of all quantum computations in $\mathfrak{U}_n$) that represent gate-level quantum circuits with qubit assignment, initialization, measurements, postprocessing, and possibly implementation details such as pulse sequences specified. 
We define the function $\pi : \mathfrak{R}_n \to \mathfrak{U}_n$
that finds the computation $u$ performed by a given concrete realization $r$. 
We define the \emph{general symmetries of $\mathfrak{U}_n$}, denoted $\Gamma(\mathfrak{U}_n)$, as the set of functions $\gamma: \mathfrak{R}_n \to \mathfrak{R}_n$ that satisfy
\begin{align}
 \pi \circ \gamma = \pi.
\end{align}
That is, $\gamma \in \Gamma(\mathfrak{U}_n)$ if and only if for all $r \in \mathfrak{R}_n$, whenever $\pi(r) = u$, $\pi(\gamma(r)) = u$.
In other words, applying $\gamma$ to any realization will produce the same quantum computation.
We also define \emph{computation-specific symmetries}, $\Gamma(\mathfrak{U}_n, r)$, as the set of functions $\gamma: \mathfrak{R}_n \to \mathfrak{R}_n$ that satisfy $\pi(\gamma(r)) = \pi(r)$ for a particular $r$. 
For example, general symmetries could be conjugations (in group-action sense) of gate-level circuits by qubit permutations. Namely, the initial state is replaced by its permutation, the gates are applied on permuted qubits, and the measurement results are permuted back. Examples of computation-specific symmetries are  gate decompositions, permutations of commuting gates, the addition of gates that preserve a given state (e.g., before measurement), and changes of gates and measurements compensated by changes in postprocessing. When $\mathfrak{R}_n$ specifies pulse sequences, symmetries can replace them with physically equivalent ones.

By distributing the computation over multiple symmetries, we cancel out the effect of control inaccuracies without amplification of random errors. The steps of the procedure, as shown in Fig.~\ref{fig:flow}, are then:

\begin{enumerate}
\item Define symmetries $\Gamma$ and sample $\Gamma^\prime\subset\Gamma$.
\item Generate circuit variants for $\Gamma^\prime$.
\item Execute each variant on the QC hardware.
\item Aggregate all measurement statistics.
\end{enumerate}

\subsection{Choice of symmetries and aggregations}
\label{sec:error-analysis}
We now consider why symmetrization works. Let {\em inaccurate realizations} $\tilde{r}_u$ be determined by instantaneous parameters of the physical system, such that $\pi(\tilde{r}_u) = u+\delta u$. A key example is unitary under- or over-rotations of particular gates~\cite{Maksymov:2021brv}.

To mitigate the impact of inaccuracies,
we consider $\gamma(\tilde{r}_u)$ for multiple $\gamma \in  \Gamma =\Gamma(\mathfrak{U}_n, r_u)$ 
so as to symmetrize the error term in
$\pi(\gamma(\tilde{r}_u)) = u+\delta u_{\Gamma \text{-inv}}+\delta u_{\gamma}$.
In the absence of errors, all realizations $r_u$ of $u$ implement the same computation. In the presence of errors, we rely on symmetrization
over multiple $\gamma \in \Gamma$ to produce a computation $u +\delta u_{\Gamma\text{-inv}}+\left< \delta u_\gamma \right>_{\Gamma}$. As long as we select an uncorrelated set of $\gamma$, $||\left< \delta u_\gamma \right>_{\Gamma}|| \ll \left< ||\delta u_\gamma ||\right>_{\Gamma}$, and the cumulative effect of non-$\Gamma$-invariant errors is much reduced.

In practice, rather than aggregating all $\pi(\gamma(\tilde{r}_u))$, we consider the output states produced by $\pi(\gamma(\tilde{r}_u))$ and aggregate their measurement statistics (because, e.g., coherently adding two quantum states would require additional qubits). 
The impact of inaccuracies on an ideal distribution 
$\mathbf{h_u} \in \mathbb{R}^+(2^n)$ may be expressed as $\mathbf{h_u} + \delta h_{\Gamma\text{-inv}} +\delta h_\gamma$.
Aggregating measurement statistics can ``enhance the contrast'' between the target output states and erroneously observed states. The error terms may cancel out, but more typically they would be uncorrelated.
For example, if  $\mathbf{h_u}=(0,\dots,1_k,\dots,0_{2^n})$ and $\Gamma=S_{2^n}$, then the symmetrized result would be 

\begin{multline}
\label{eq:singleout}
\mathbf{h_u}+\delta h_{\Gamma\text{-inv}}+\left< \delta h_\gamma \right>_{\Gamma} = \mathbf{h_u}+\delta h_{\Gamma\text{-inv}} = \\ = \left(\frac{\varepsilon}{2^n-1},\dots,(1-\varepsilon)_k,\dots,\frac{\varepsilon}{2^n-1}\right)
\end{multline}

\noindent
where $\varepsilon$ is the average error on $k$ and $\Gamma$ is sufficiently large. The probability of output $k$ is no better, but other probabilities (that should ideally be 0) become less pronounced. This decreases the probability that an erroneous output is observed repeatedly by chance and helps find the desired outputs with fewer samples.

The term $\delta h_{\Gamma\text{-inv}}$ in Eq.~\ref{eq:singleout} captures the remaining fully-depolarizing error channel, i.e., the effect of incoherent errors~\cite{Takagi2022}. This residual error can be reduced with aggregation techniques such as \textit{plurality voting}, e.g., for $\mathbf{h_u}$ with $l$ output states of frequency $\frac{1}{l}$ if $\varepsilon<1-l/2^n$ as proven in Supplemental Materials.

As a concrete example, we demonstrate the effect of symmetrization on 4-qubit circuits with six two-qubit gates on different qubit pairs and random single-qubit gates mapped to eight ions (see Methods). We model gate miscalibrations as random under-rotations of multiple two-qubit gates fixed per qubit pair. We assume an average under-rotation across all qubit pairs causing a similar error for all variants (Fig.~\ref{fig:pca}a). Symmetries $\Gamma$ are represented by eight random qubit assignments $\gamma$. For each qubit assignment, we simulated corresponding inaccurate realizations to obtain vectors $\mathbf{h_u}+\delta h_{\Gamma\text{-inv}}+\delta h_\gamma$. In Fig.~\ref{fig:pca}, we illustrate 256-dimensional vectors for ideal, individual, and symmetrized results by plotting their two largest principal components (principal component analysis (PCA) was initially performed on $\{ \delta h_{\Gamma\text{-inv}}+\delta h_\gamma \}$ vectors).
In the first case, since all gates are under-rotated by some amount on average, the variants fail to symmetrize the errors because they are only exploring qubit assignment symmetries. Hence, error effects $\delta h_{\Gamma\text{-inv}}$ remain after aggregation as shown in Fig.~\ref{fig:pca}a. In the second example (Fig.~\ref{fig:pca}b), we use additional symmetries of gate decompositions, to zero out $\delta h_{\Gamma\text{-inv}}$. The effect of under-rotation in fully-entangling XX gates is addressed using an alternative implementation that combines phase-flipped XX$^{-1}$ gates with pairs of X-gates thus implementing the same ideal unitary but reversing the effect of under-rotation.

An aggregation strategy for measurement statistics is a procedure that combines measurement statics from multiple implementations of the same computation. Without errors, all implementations should produce identical statistics in the limit (with infinite repetition count).
An aggregation strategy is considered \textit{stable} for a given type of statistics if, provided a set of identical statistics of this type, it produces another copy. Aggregation by \textit{componentwise averaging} is trivially stable for statistics of any type. Yet aggregation by \textit{voting} is not. This can be seen for the probability distribution $(1-\varepsilon, \varepsilon)$ which voting-based aggregation brings closer to $(1,0)$ for $\varepsilon<1/2$. What makes aggregation strategies useful is that ($i$) they coerce arbitrary statistics to statistics of the desired type, ($ii$) they distill original statistics from 
multiple erroneous variants of the original. To this end, output probability distributions are analytically characterized for many quantum algorithms including Shor's and Grover's. The choice of aggregation is determined by the type of output probability distribution of a given quantum algorithm. 

For best performance, we recommend aggregation by plurality voting for quantum algorithms with ideal measurement statistics comprising of $l$ outputs with frequencies $\frac{1}{l}$. Such algorithms have zero-frequency outputs and a subset of target outputs that needs to be determined. For algorithms with different measurement statistics, aggregation by averaging can be used to avoid distortion.

\subsection{Experiment}

\label{sec:Experiment}

We evaluate the impact of symmetrization and the choice of aggregation strategy experimentally by comparing the results of unsymmetrized runs to symmetrized runs with componentwise averaging and plurality voting.
We use the IonQ Aria trapped-ion quantum computer for these experiments, configured to utilize 20 addressable qubits. See methods for experimental details.

Performance is measured by Hellinger fidelity, defined as a statistical overlap $F_H = \left(\sum_i \sqrt{p_i q_i}\right)^2$ between the actual output statistics $p_i$ and the ideal result $q_i$ is computed via an error-free simulator. $F_H$ ranges from $0$ to $1$, with 0 capturing probability distributions that do not overlap, and 1 corresponding to a pair of identical distributions.
Also known as the \textit{Bhattacharyya coefficient} \cite{bhattacharyya1943measure}, $F_H$ is commonly used to measure the discrepancy between probability distributions and is consistent with the definition of fidelity for quantum states.

\begin{figure}[t]
\includegraphics[scale=0.42]{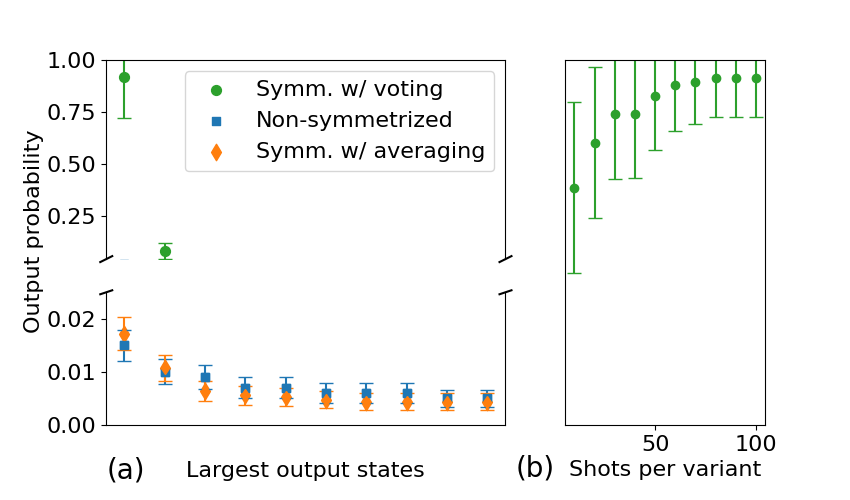}
\caption{\label{fig:improvement} Symmetrization of a 13-qubit single-output QFT-based adder circuit~\cite{qedcArxiv} boosts success probability when aggregated with plurality voting. (a) We compare the results without symmetrization and with symmetrization using either componentwise averaging or plurality voting. 
Blue squares show the unsymmetrized results, using a single realization with 2500 repeated measurements (the total number of experiments is the same in all three cases). Orange diamonds represent the symmetrization of this execution with 25 realizations and 100 repetitions per variant, aggregated with componentwise averaging. Green circles use the same realizations as orange points, but the symmetrized histogram is generated with plurality voting. This boosts the probability of the target outcome because the outcomes not matched between the variants are filtered out. In this case, the symmetrized results keep improving up until around 80 repetitions per variant (b).}
\end{figure}

\begin{figure*}
\includegraphics[scale=0.33]{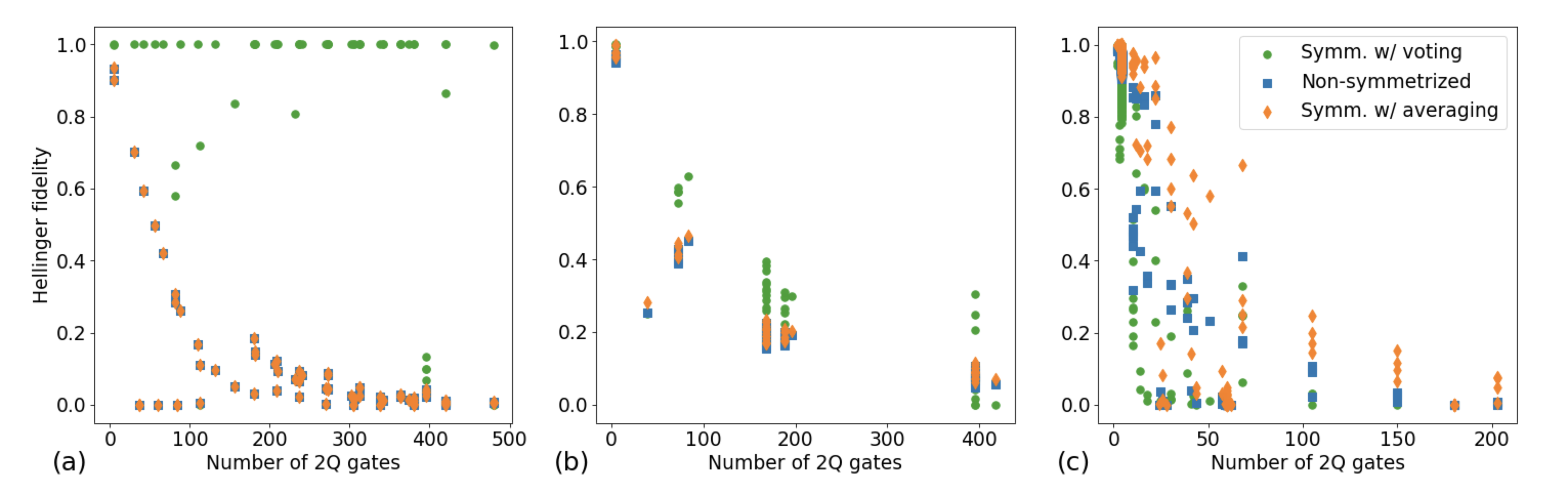}
\caption{\label{fig:fid} Comparison of fidelity improvement for algorithms (a) with one output state (quantum Fourier transform-based adders, phase estimation, and amplitude estimation), (b) with four output states (amplitude estimation and Monte Carlo sampling) or (c) with multiple output states (variational quantum eigensolvers and quantum machine learning), with and without symmetrization. Hellinger fidelity (see main text) is shown as a function of circuit depth, expressed in the number of two-qubit (2Q) MS (M{\o}lmer-S{\o}rensen) gates. Unsymmetrized results (green circles) are compared with results symmetrized and aggregated with plurality voting (blue squares) and componentwise averaging (orange diamonds). Unsymmetrized and symmetrized results are shown for the same set of experiments each consisting of 25 realizations with 100 repetitions per variant. Unsymmetrized fidelities are calculated as averages over 25 individual fidelities of each variant, which is why for the algorithms with one output state, they match exactly with the symmetrized results aggregated with componentwise averaging (a).}
\end{figure*}

We demonstrate the impact of symmetrization on a 13-qubit 
single-output QFT-based adder circuit~\cite{qedcArxiv}. In Fig.~\ref{fig:improvement}a we compare the largest output probabilities out of $2^{13}$ between the unsymmetrized histogram, symmetrized with componentwise averaging, and symmetrized with plurality voting. The first and largest value corresponds to the bitstring with ideal probability 1 while the rest should have otuput probability 0. Fig.~\ref{fig:improvement}b shows the change in the error bars with the number of shots.
We observe that symmetrization with componentwise averaging does not improve the probability of the target bitstring but does reduce next-largest probabilities, which allows for a dramatic increase in the probability of the target state after plurality voting (from 1.5\% to 95\%). For a 15-qubit QFT-based adder, the boost exceeds 100$\times$. 

Next, we examine the performance of symmetrization for several use cases shown in Fig.~\ref{fig:fid}. All jobs had 2500 shots taken with output probability distributions that vary in the number of correct output states with nonzero probability, and thus benefit differently from different aggregation strategies.
In Fig.~\ref{fig:fid}a, we evaluate results for QFT-based adders, phase estimation, and amplitude estimation with a single output state~\cite{qedcArxiv}. We see that symmetrization with plurality voting significantly increases $F_H$ while 
symmetrized runs with componentwise averaging show no improvement.
In Fig.~\ref{fig:fid}b, we compare results for amplitude estimation and Monte Carlo sampling circuits before tracing out the ancillary qubits. Symmetrization with plurality voting still shows the strongest improvement in $F_H$ but componentwise averaging is also better than no symmetrization because it evens out the errors across the four target states.
In Fig.~\ref{fig:fid}c, we evaluate symmetrization on variational quantum eigensolver (VQE) and quantum machine learning (QML) circuits~\cite{Daiwei_QML}. Those circuits have broader, irregular output distribution, so that symmetrization with componentwise averaging shows the best improvement while plurality voting can skew the results. Circuits with more peaked output probability distributions often benefit more from aggregation with plurality voting (see Methods).

\section{Discussion}
\label{sec:Implication}
To enhance the performance of present-day quantum computers, scientists and engineers devote considerable effort to finding and mitigating error sources.
However, device inaccuracies and computational errors tend to persist
even after heroic improvements. In particular, \emph{coherent errors} --- which often arise from unintentional mis-calibrations that may drift in time --- can significantly degrade performance (error mitigation techniques
run into limitations for incoherent errors, as proven in~\cite{Takagi2022} via lower bounds). 
Even without hardware improvement, our strategy boosts QC performance because systematic errors vary between certain symmetric implementations.
\emph{Symmetrization} is the process of creating variant implementations
of quantum computation on specific hardware, so as to diminish errors (Fig.~\ref{fig:pca}b) and improve QC performance.
In particular, we split a given number of executions of a quantum circuit into batches, and each batch is executed using a different realization that should, by symmetry, give the same outcome in the absence of inaccuracies.
To aggregate the measurement statistics of symmetrized runs, we show that appropriately chosen techniques produce strong gains on a commercial QC.

Aggregation by componentwise averaging is stable for measurement statistics of any type. We use it to demonstrate a
 2$\times$ fidelity improvement for QML and VQE algorithms which produce few low-frequency outputs. For the algorithms with many zero-frequency outputs (QFT-based adders, amplitude estimation, phase estimation, Monte Carlo sampling) where the output result is encoded in a small set of target output states, componentwise averaging gives little to no improvement since it cannot recover zero-frequency outputs. Plurality voting is stable for this type of measurement statistics and demonstrates an up to 100$\times$ performance boost on our 20-qubit commercial QC~\cite{IonQ_Q1_2022}.
Our error mitigation strategy appears applicable to multiple qubit technologies and is compatible with prior error-mitigation strategies. 

\section{Methods}
\label{sec:methods}

Here, we give additional details on the two steps of symmetrization: the sampling of symmetries and the aggregation of measurement statistics. We also outline several considerations of scalability for these two steps. Details on our experiment and simulation are given as well.

\subsection{Sampling symmetries}
\label{sec:samp}
Since using all possible symmetries for a given quantum computation is impractical, we need to sample from those symmetries. For an error-free quantum computation, it suffices to use the identity symmetry alone. Assuming a single inaccuracy of a known type, very few symmetries would be sufficient, regardless of the magnitude of inaccuracies or the number of qubits. As the dimensionality of the error space grows, more symmetries must be sampled.

We sample symmetries $\gamma$ to minimize $\left< \delta u_{\gamma} \right>$. Selecting dissimilar (rather than random~\cite{Takagi2022}) symmetries reduces the bias and decorrelates inaccuracies between the variants. If symmetries $\Gamma$ are qubit assignments, one may select assignments that share fewer gates between physical qubits for a given device-specific connectivity.

\subsection{Aggregation strategies}
\label{sec:aggr}
Continuing the discussion in Section II.B, 
we compare two aggregation strategies for measurement statistics:
one represents them by frequency distributions, and the other --- by 
raw output samples.

\noindent
{\bf Componentwise averaging.}
Our first strategy performs componentwise averaging of frequencies in given histograms~\cite{10.1145/3352460.3358257}.
It suites computations with few or no zeros in the ideal probability distribution, such as VQE or QML circuits.
Fig.~\ref{fig:pca}b represents with vectors the differences between the histogram of each variant and the ideal histogram. With an appropriate sampling of symmetries, these vectors cancel out and their sum converges to the ideal one as the number of variants increases.
Componentwise averaging is unable to recover zero frequencies in ideal output distributions.
Intuitively, averaging is related to the set-union operation, whereas set-intersection suggests different aggregation methods. Namely, methods based on voting and can filter out low-frequency outputs and recover zero frequencies.

\noindent
{\bf Plurality voting.}
To specify aggregation by plurality voting we represent measurement results for each circuit variant by a set of bitstrings, one per shot. Since each variant has the same number of shots, each shot can be represented by the same number of variant bitstrings (see Fig.~\ref{fig:flow}c). The winning bitstring is determined by the plurality vote that additionally exceeds a specified threshold. Since the order of bitstrings does not matter, voting per shot is repeated many times over the scrambled orderings of bitstrings in each variant. If no winning bitstring is found, the threshold is reduced by one.  If no winner exists for the threshold value of two, a componentwise average of variant histograms is returned (this is common for spread-out distributions and/or also when available samples lack statistical significance). After accumulating counts from all winning bitstrings,
the final histogram is normalized.
The voting threshold is determined by training runs for a given QC architecture. Executed for a set of circuits with known outputs, the training runs also help to determine optimal numbers of variants, repetitions, gate decompositions etc. These hyperparameters are used for multiple circuits.

Due to the nonlinearity of voting, it is a stable aggregation strategy for ideal output probabilities with $l$ equally probable outputs and $r-l$ zero frequencies
(see Supplemental Materials for proofs).
Relevant circuits include QFT-based adders, phase and amplitude estimation algorithms, some Monte Carlo algorithms.

\begin{figure}
\includegraphics[scale=0.32]{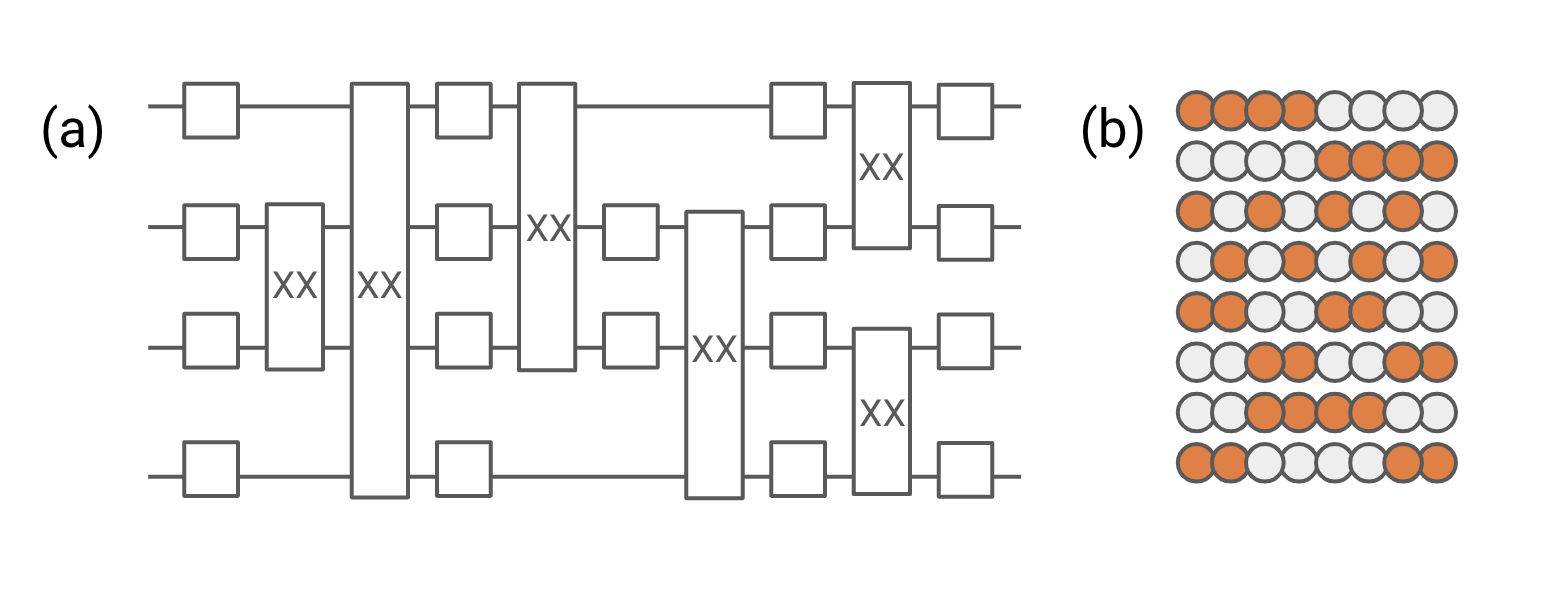}
\caption{\label{fig:sim}
Simulated 4-qubit small random circuit comprising XX gates and random single-qubit gates (a) was mapped onto eight qubits following the assignment highlighted by orange circles (b).
}
\end{figure}

\subsection{Considerations of scalability}

\noindent
{\bf Sampling of symmetries.} Uniformly random selection~\cite{Patel2020} offers a computationally scalable sampling method in that the memory of all previously selected symmetries is not necessary to select the next $\gamma$.  Since $k$ uniformly random symmetries produce a uniformly random set of $\delta u_\gamma$, we have $\left< \delta u_\gamma \right> \propto 1/\sqrt{k}$~\cite{Takagi2022}. Selecting dissimilar symmetries can reduce the expectation $\left< \delta u_\gamma \right>$, just like low-discrepancy sequences~\cite{Halton1960,Sobol1967,Fisher1974-un} improve upon random samples. To avoid specializing symmetry selection to each individual computation, we engineer it for entire classes of computation, possibly with moderate suboptimality. For example, similar VQE circuits (on the same number of qubits) can be viewed as one class. 

\noindent
{\bf Aggregation of measurement statistics.} Run time and memory complexity depend on the number of observed output states rather than the number of all possible states. For componentwise averaging, the postprocessing comes down to the simple or weighted merger of output counts (zero frequencies are implicit). Plurality voting is performed in small groups of outputs and does not require significant memory.

\subsection{Experimental details}
\label{sec:details}
We use the IonQ Aria~\cite{IonQ_Q1_2022} trapped-ion quantum computer which utilizes trapped Ytterbium ions individually addressed by pulses of 355 nm light. These pulses can be engineered to generate a M{\o}lmer-S{\o}rensen entangling gate between ions as well as single qubit rotations/gates. The Aria system uses 22 such ions as qubits to perform quantum information processing. Here, we split our experiments into 25 different maps (variants) between physical and computational qubits, running 100 experimental shots on each variant resulting in 2500 total experimental repetitions. For circuits on more than six qubits, we genrated permutations on a set of physical qubits. Otherwise, two additional physical qubits were utilized to increase the number of diverse mappings. All variants were measured under similar conditions. To this end, for most of our experiments, we executed our jobs within one calibration cycle. Whenever this was not possible (e.g. due to ion-chain loss), the calibration parameters were carefully replicated.

\subsection{Simulation details}

We show the effect of symmetrization on a 4-qubit random circuit (Fig.~\ref{fig:sim}a) in eight implementations with varying qubit assignment onto eight ions (Fig.~\ref{fig:sim}b). We model gate miscalibrations as random under-rotations of multiple two-qubit gates fixed per ion pair. We assume an average under-rotation across all qubit pairs causing a similar error for all variants (Fig.~\ref{fig:pca}a). Symmetries $\Gamma$ are represented by eight random qubit assignments $\gamma$. For each qubit assignment, we simulated corresponding inaccurate realizations to obtain vectors $\mathbf{h_u}+\delta h_
{\Gamma\text{-inv}}+\delta h_\gamma$.

In the first case (Fig.~\ref{fig:pca}a), we use only vary qubit assignment between the implementations (Fig.~\ref{fig:sim}b) while in the other case (Fig.~\ref{fig:pca}b), we also replace every fourth XX-gate with a phase-flipped XX$^{-1}$ gates with pairs of X-gates thus implementing the same ideal unitary but reversing the effect of under-rotation (Fig.~\ref{fig:pca}b).

\bibliography{citations}

\section{Acknowledgments}
\vspace{-2mm}
We thank John Gamble for insightful discussions and valuable suggestions.

\section{Contributions}
\vspace{-2mm}
I.M. and Y.N. conceived and coordinated the project. I.M. proposed the idea of the strategy and designed the methods with A.M. J.N. conducted the experiment, A.M. wrote and performed the simulations and data processing. All authors contributed to writing the manuscript.

\appendix

\section{Supplemental Information - validity and efficacy of plurality voting}

As detailed in the main text, plurality voting is a powerful aggregation strategy because it is \emph{nonlinear} and can strongly suppress errors for some circuits. 
However, it can also degrade performance if used for other circuits. 
Here, we formally analyze the properties of the plurality vote procedure, detailing the conditions that should be satisfied for its use to be beneficial.

Let us first consider the simple case with no finite-sample effects, no errors, and $r$ possible valid output states.
We consider $m$ variants with the probability $h_i$ to measure state $i$.
The probability to measure each output state a specified number of times $\{x_1,\dots,x_r\} = \mathbf{x}^r$ such that $\sum_{k=1}^r x_k = m$ can be written in terms of multinomial coefficients as
\begin{align}
    \gamma(m,\mathbf{x}^r)
    = \binom{m}{x_1,\dots,x_r} \prod_{j=1}^r h_j^{x_j},
\end{align}

We can then write down the probability to find a state $i$ exactly $x_i$ times out of $m$ variants by summing $\gamma(m,\mathbf{x}^r)$ over every variable in $\mathbf{x}^r$ except the prefixed $x_i$ denoting the constraint $\sum_{k=1}^r x_k = m$ with a primed sum as

\begin{align}
    \gamma_i(m,x_i)
    = \sideset{}{'} \sum_{\mathbf{x}^r\setminus x_i=0} \binom{m}{x_1,\dots,x_r} \prod_{j=1}^r h_j^{x_j},
\end{align}

The probability that the measured state $i$ is the most frequently measured state and is found at least $t$ times out of $m$ variants can be expressed as a sum over $\gamma(m,x_i)$ with an additional constraint that requires any $x_k \in \mathbf{x}^r \setminus x_i$ to be less than $x_i$:
\begin{align}
G_i(m,t) = \sum_{x_i=t}^m \phantom{x} \sideset{}{'} \sum_{\mathbf{x}^r\setminus x_i=0}^{x_i-1} \binom{m}{x_1,\dots,x_r} \prod_{j=1}^r h_j^{x_j}
\end{align}

The output probability of state $i$ in the aggregated results can be expressed through the normalized $G_i(m,t)$
\begin{align}
    g_i(m,t) = \frac{G_i(m,t)}{\sum_j G_j(m,t)}
\end{align}

\begin{theorem}
\label{theo:main}
For any ideal output probability distribution $\{h_1,\dots,h_r\}$ and any two states $1\leq i \neq j \leq r$, the corresponding aggregated output probabilities $g_i$, $g_j$ satisfy $g_i/g_j<h_i/h_j$ if $h_i<h_j \notin \{0,1\}$.
\end{theorem}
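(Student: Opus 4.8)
The plan is to show that plurality-vote aggregation contracts the ratio of probabilities toward the larger one, i.e.\ it amplifies the more likely output at the expense of the less likely one. The natural approach is to compare $G_i(m,t)$ and $G_j(m,t)$ directly by pairing terms in their defining sums. Fix two states $i \neq j$ with $0 < h_i < h_j < 1$, and write $p = h_i$, $q = h_j$. The key observation is that $G_i(m,t)$ and $G_j(m,t)$ are sums over the \emph{same} index set of multi-indices $\mathbf{x}^r$ --- the constraints ``$x_i \ge t$ and $x_i$ is the strict plurality'' versus ``$x_j \ge t$ and $x_j$ is the strict plurality'' are related by the transposition of coordinates $i \leftrightarrow j$. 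So I would set up a bijection $\sigma$ on multi-indices that swaps the $i$th and $j$th entries, mapping each term of $G_i$ to a term of $G_j$. Under this bijection the multinomial coefficient $\binom{m}{x_1,\dots,x_r}$ is invariant, and the product $\prod_\ell h_\ell^{x_\ell}$ changes only in the $i,j$ factors: a term with exponents $(x_i,x_j) = (a,b)$, $a \ge t > b$ (roughly), maps to one with $(b,a)$, and the ratio of the two terms is $p^a q^b / (p^b q^a) = (p/q)^{a-b} < 1$ since $a > b$ and $p/q < 1$.

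Concretely, the steps are: (1) write $G_i(m,t) = \sum \binom{m}{\mathbf{x}} \prod h_\ell^{x_\ell}$ over multi-indices with $x_i \ge t$ and $x_i > x_k$ for all $k \ne i$; do the same for $G_j$. (2) Partition the index set for $G_i$ into those with $x_j \ge t$ as well (the ``symmetric core'', where both $i$ and $j$ compete) and the rest; do likewise for $G_j$. On the symmetric core, the $i\leftrightarrow j$ swap is an involution of \emph{that} set, and pairing $\mathbf{x}$ with $\sigma(\mathbf{x})$ shows the core contribution to $G_i$ is strictly smaller than its contribution to $G_j$ whenever the two exponents differ --- and the plurality constraint forces them to differ, since one of $x_i, x_j$ must strictly exceed the other. (3) For the remaining terms of $G_i$ (where $x_i$ is the plurality but $x_j < t$), map them via $\sigma$ injectively into terms counted by $G_j$ but \emph{not} on its symmetric core; again each term strictly shrinks by a factor $(p/q)^{x_i - x_j} < 1$. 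This establishes $G_i(m,t) < G_j(m,t) \cdot (\text{something} \le 1)$, and more carefully a termwise bound giving $G_i/G_j < p/q = h_i/h_j$. (4) Finally, since $g_i/g_j = G_i/G_j$ (the normalization $\sum_k G_k$ cancels), conclude $g_i/g_j < h_i/h_j$.

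I expect the main obstacle to be bookkeeping in step (3): the swap $\sigma$ is a clean involution on the ``symmetric core'' where both $x_i \ge t$ and $x_j \ge t$, but off the core the map sends $G_i$-terms to $G_j$-terms that are not themselves fixed by any involution, so I must check that $\sigma$ is injective there and that its image is disjoint from the core --- i.e.\ that no $G_j$-term gets double-counted when I combine the core pairing with the off-core injection. A careful case split on whether $x_i \ge t$, $x_j \ge t$, both, or exactly one, should resolve this; the strict inequality $h_i \ne h_j$ and the strictness of the plurality (``$>$'', not ``$\ge$'') are exactly what prevent the bound from degenerating to an equality, and the exclusion $h_i, h_j \notin \{0,1\}$ is needed so that $p/q$ is a genuine number in $(0,1)$ and all the exponent-ratio factors are well-defined and strictly less than $1$. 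A cleaner alternative, if the direct pairing gets unwieldy, is to couple the two multinomial experiments: realize the $m$ variant-outcomes jointly, condition on the multiset of outcomes for all states other than $i,j$ and on $x_i + x_j$, and observe that conditionally the split of $x_i + x_j$ between $i$ and $j$ is binomial with parameter $q/(p+q) > 1/2$; then ``$i$ wins the plurality'' is a down-set and ``$j$ wins'' is the mirror up-set, so a stochastic-dominance / FKG-type argument gives $\Pr[j \text{ wins}] \ge \Pr[i \text{ wins}]$ with the quantitative ratio bound following from the binomial tilt. I would pursue whichever of these two routes produces the factor $h_i/h_j$ most transparently.
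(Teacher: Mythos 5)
Your primary route is essentially the paper's proof in combinatorial dress: the paper's change of summation variables to $u=x_i-x_j$, $q=x_i+x_j$ with the $i\leftrightarrow j$-symmetric factor $\phi_{ij}$ is exactly your transposition pairing, which matches each term of $G_i$ with a term of $G_j$ larger by the factor $(h_j/h_i)^{x_i-x_j}$, where $x_i-x_j\ge 1$ is forced by the plurality constraint, yielding $G_i/G_j< h_i/h_j$ and hence $g_i/g_j<h_i/h_j$ after the common normalization cancels. The core/off-core partition you worry about in step (3) is unnecessary --- the swap is already a bijection between the full index sets of $G_i$ and $G_j$, which are automatically disjoint since $x_i>x_j$ and $x_j>x_i$ cannot both hold --- so the bookkeeping concern dissolves and your argument coincides with the paper's.
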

\begin{proof}
Let us consider an output probability distribution with $l$ nonzero output states. $G_i(m,t)$ can be written as
\begin{align}
    G_i(m,t) = \sum_{x_i=t}^m \sum_{x_j=0}^{\substack{\min(x_i-1,\\ m-x_i)}} h_i^{x_i} h_j^{x_j} f_{ij}^m(x_i,x_j),
\end{align}
where
\begin{align}
f_{ij}^m(x_i,x_j) = \sideset{}{'} \sum_{\mathbf{x}^r\setminus x_i\setminus x_j=0}^{x_i-1} \phantom{|} \prod_{k=1}^l h_k^{x_k} \binom{m}{x_1,\dots,x_l}
\end{align}
Let us change the summation over $t \leq x_i \leq m$ and $0 \leq x_j \leq \min(x_i-1,m-x_i)$ to $q=x_i+x_j$ and $u=x_i-x_j$ where $1 \leq u \leq m$ and $\max(u,2t-u) \leq q \leq m$, which can be confirmed geometrically, so that
\begin{multline}
    G_i(m,t) = \sum_{u=1}^{m} h_i^u \sum_{\substack{q=\max(u, \\ 2t-u)}}^m (h_i h_j)^\frac{q-u}{2}  f_{ij}^m(\tfrac{q+u}{2},\tfrac{q-u}{2}) = \\
    = \sum_{u=1}^{m} h_i^u \phi_{ij}(m,t,u),
\end{multline}
where $\phi_{ij}^m(t,u) = \sum_{q=\max(t,u)}^m (h_i h_j)^\frac{q-u}{2}  f_{ij}^m(\tfrac{q+u}{2},\tfrac{q-u}{2})$. The ratio between the aggregated output probabilities can be expressed as
\begin{multline}
    \frac{g_i(m,t)}{g_j(m,t)} = \frac{G_i(m,t)}{G_j(m,t)} = \frac{h_i}{h_j} \frac{\sum_{u=1}^{m} h_i^{u-1}\phi_{ij}^m(t,u)}{\sum_{u=1}^{m} h_j^{u-1}\phi_{ij}^m(t,u)}
\end{multline}
Comparing the sums term by term, since $h_i<h_j$, for $u \geq 1$, $h_i^{u-1} \leq h_j^{u-1}$ so that $\frac{g_i(m,t)}{g_j(m,t)}<\frac{h_i}{h_j}$.
\end{proof}

\begin{corollary}
If $h_i=0$, $g_i(m,t)=\alpha G_i(m,t) = \alpha \sum_{u=1}^{m} h_i^u \phi_{ij}(m,t,u) = 0$.
\end{corollary}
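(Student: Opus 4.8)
The plan is to reuse the decomposition of $G_i(m,t)$ established in the proof of Theorem~\ref{theo:main} and to track the power of $h_i$ carried by every surviving term. Recall that there we rewrote $G_i(m,t)=\sum_{u=1}^{m}h_i^{u}\,\phi_{ij}^m(t,u)$ for an arbitrary reference state $j\neq i$, with $\phi_{ij}^m(t,u)$ a finite nonnegative quantity. The crucial structural point is that the outer index runs from $u=1$, not $u=0$: since plurality voting requires a winning count of at least $t\ge 1$ (indeed $t\ge 2$ in our procedure), one always has $x_i\ge t\ge 1$, so the explicit prefactor $h_i^{u}$ has exponent $u\ge 1$ throughout the sum. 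Equivalently, one can read this off the definition of $G_i(m,t)$ directly, since the monomial $\prod_k h_k^{x_k}$ contains the factor $h_i^{x_i}$ with $x_i\ge t\ge 1$ in every term.

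First I would set $h_i=0$ and observe that $h_i^{u}=0$ for all $u\ge 1$, so each summand $h_i^{u}\phi_{ij}^m(t,u)$ vanishes and hence $G_i(m,t)=0$; the same conclusion follows from the undecomposed definition, where every term carries $h_i^{x_i}$ with $x_i\ge 1$. Probabilistically this is transparent: $G_i(m,t)$ is the chance that state $i$ both wins the plurality and is observed at least $t\ge 1$ times across the $m$ variants, and if $h_i=0$ then state $i$ is never measured in any variant, so that event has probability zero.

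Then I would conclude by normalization: $g_i(m,t)=G_i(m,t)/\sum_{\ell}G_\ell(m,t)=0$, provided the denominator does not vanish. That nonvanishing is the only part of the argument calling for a remark rather than a mechanical step, and I expect it to be easy: for the ideal distributions of interest --- $l\ge 1$ output states of equal frequency $1/l$ --- and for a threshold $t$ small enough to be attainable (in practice $t$ is fixed by the training runs, and by construction the voting procedure lowers $t$ until some state wins, so $\sum_{\ell}G_\ell(m,t)>0$ whenever voting returns a histogram), at least one $G_\ell(m,t)$ is strictly positive and the ratio is well defined. The content of the corollary is thus the observation that plurality voting, unlike componentwise averaging, cannot place weight on a state that is never measured, and the formal proof is a one-line specialization of the identity already proved for Theorem~\ref{theo:main}.
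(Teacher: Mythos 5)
Your proof is correct and follows the same route the paper takes: the corollary's statement already encodes the argument, namely that $G_i(m,t)=\sum_{u=1}^{m}h_i^{u}\phi_{ij}(m,t,u)$ has every term carrying a factor $h_i^{u}$ with $u\geq 1$, so it vanishes when $h_i=0$, and $g_i$ is just $G_i$ times the normalization constant $\alpha$. Your added remarks on the nonvanishing denominator and the probabilistic reading are sensible but do not change the approach.
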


\begin{corollary}
If $h_i=1$, $h_{k\neq i} = g_{k \neq i} = 0$, $g_i=1$.
\end{corollary}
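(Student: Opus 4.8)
The plan is to read this corollary off from the preceding corollary together with the fact that $g_i(m,t) = G_i(m,t)/\sum_j G_j(m,t)$ is, by construction, a normalized quantity. First I would note that the hypothesis $h_i = 1$, combined with $\sum_{k=1}^r h_k = 1$, forces $h_k = 0$ for every $k \neq i$; this is the first assertion of the statement.

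Next, the previous corollary, applied with $i$ replaced by each $k \neq i$, gives $G_k(m,t) = \sum_{u=1}^m h_k^u\,\phi_{kj}^m(t,u) = 0$, since every summand contains a positive power of $h_k = 0$. Hence, as soon as the normalizing denominator $\sum_j G_j(m,t)$ is known to be nonzero, $g_k(m,t) = 0$ for all $k \neq i$.

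Finally I would pin down the denominator by evaluating $G_i(m,t)$ directly in this degenerate case: with $h_i = 1$ the multinomial weight $\binom{m}{x_1,\dots,x_r}\prod_j h_j^{x_j}$ is supported on the single configuration $x_i = m$, $x_{k\neq i} = 0$, for which state $i$ is trivially the unique most-frequently-measured state and is measured $m \ge t$ times (the voting threshold always satisfies $t \le m$). That configuration therefore realizes the event defining $G_i(m,t)$, so $G_i(m,t) = 1$, whence $\sum_j G_j(m,t) = 1 \neq 0$ and $g_i(m,t) = 1$. Equivalently, once the $g_k$ are well defined they form a probability distribution, so $g_{k\neq i} = 0$ already forces $g_i = 1$.

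This corollary is essentially immediate; the only step that needs a line of care is checking that the normalizing denominator $\sum_j G_j(m,t)$ does not vanish — so that the aggregated probabilities are defined at all and sum to one — which the explicit evaluation $G_i(m,t) = 1$ settles, together with the observation that the threshold obeys $t \le m$ so the ``$\ge t$ occurrences'' condition is not vacuous.
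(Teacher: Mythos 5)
Your proof is correct and follows the route the paper implicitly intends: the paper states this corollary without proof, treating it as immediate from normalization of the $h_k$, the preceding corollary (zero ideal probability gives $G_k=0$), and the observation that all multinomial mass sits on $x_i=m$ so $G_i(m,t)=1$. Your extra care in checking that the denominator $\sum_j G_j(m,t)$ is nonzero and that $t\le m$ is a reasonable tightening of the same argument, not a different approach.
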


\begin{corollary}
For any output probability distribution $\{h_1,\dots,h_r\}$ such that $h_i=1/l$ for $1 \leq l \leq r$ states and $h_i=0$ for the rest, $g_i(m,t)=h_i$.
\begin{align}
    g_i(m,t) = \begin{cases}
			0, & h_i=0\\
            \frac{G_i(m,t)}{\sum^l G_j(m,t)} = \frac{G_i(m,t)}{l G_i(m,t)} = \frac{1}{l}, & h_i=1/l
		 \end{cases}
\end{align}
\end{corollary}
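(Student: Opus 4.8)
The plan is to split the argument by whether $h_i=0$ or $h_i=1/l$, and to reduce the nonzero case to a symmetry observation. The $h_i=0$ case is immediate: by the first Corollary, $h_i=0$ forces $G_i(m,t)=0$, and since $g_i=\alpha G_i$ with $\alpha=1/\sum_j G_j$, we get $g_i=0=h_i$. This also shows that the normalizing denominator $\sum_j G_j(m,t)$ receives no contribution from any zero state, so it collapses to a sum over the $l$ nonzero states alone.

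The substance is in the case $h_i=1/l$, where I would show that $G_j(m,t)$ takes the same value for every nonzero state $j$. The key idea is that the ideal distribution is invariant under any permutation of the labels of the $l$ equiprobable states. Concretely, for two nonzero states $i$ and $j$, consider the involution on count vectors $\mathbf{x}^r$ that swaps the entries $x_i$ and $x_j$ and fixes all others. Under this swap the multinomial coefficient $\binom{m}{x_1,\dots,x_r}$ is unchanged, and the weight $\prod_k h_k^{x_k}$ is unchanged because $h_i=h_j$, so the summand defining $G$ is preserved. Moreover, the \emph{strict-plurality} constraint in $G_i$---that every $x_k$ with $k\neq i$ is at most $x_i-1$---is carried bijectively onto the corresponding constraint for $j$. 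Hence the swap is a weight-preserving bijection between the configurations counted by $G_i$ and those counted by $G_j$, giving $G_i(m,t)=G_j(m,t)$.

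With this symmetry in hand, the conclusion is the short computation already indicated in the statement: the denominator becomes $\sum_j G_j(m,t)=l\,G_i(m,t)$, and therefore $g_i(m,t)=G_i/(l\,G_i)=1/l=h_i$.

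The main obstacle I anticipate is making the bijection fully rigorous rather than merely plausible---in particular, verifying that the strict inequality $x_k\le x_i-1$ (which forces $i$ to be the \emph{unique} winner and excludes tied-top configurations) transforms exactly into the analogous constraint for $j$ under the swap, so that no configuration is double-counted or dropped. I would emphasize that this argument does not require the total winning probability $\sum_j G_j$ to equal one: tied-top configurations are simply absent from every $G_j$, and since they are excluded symmetrically across the nonzero states, the normalized ratio $g_i$ is unaffected.
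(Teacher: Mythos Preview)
Your proposal is correct and follows the same approach as the paper: the paper's entire argument is the displayed case split in the corollary statement, which tacitly uses the symmetry $G_i=G_j$ for nonzero states to write $\sum_j^l G_j = l\,G_i$. Your label-swapping bijection makes explicit precisely the step the paper leaves implicit, so there is no substantive difference.
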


\begin{corollary}
If there is an imbalance $d$ between a state with probability $1/l$ and a state with probability $0$ so that $h_i=\frac{1}{l}-d$ and $h_j=d$, $\frac{g_i(m,t)}{g_j(m,t)}>\frac{h_i}{h_j}$ given that $h_i>h_j$ or that $0<d<\frac{1}{2l}$.
\end{corollary}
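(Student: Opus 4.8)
The plan is to read this corollary as the orientation-reversed twin of Theorem~\ref{theo:main}: the perturbed distribution it describes satisfies the hypotheses of that theorem with the two marked states interchanged, so the conclusion carries over with the inequality flipped. No new analytic work is really needed --- everything was already done in the proof of Theorem~\ref{theo:main}.

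First I would write the perturbed distribution out explicitly. Starting from the balanced distribution with $l$ states of weight $1/l$ (and possibly further zero states), move state $i$ to $h_i = \tfrac{1}{l}-d$ and one formerly zero state $j$ to $h_j = d$, leaving the other $l-1$ balanced states at $1/l$; then $\sum_k h_k = \tfrac{l-1}{l} + (\tfrac1l - d) + d = 1$, so the distribution is admissible. Next I would check the hypotheses of Theorem~\ref{theo:main}: the phrase ``there is an imbalance $d$'' means $d>0$, and $h_i>h_j$ is literally $\tfrac1l - d > d$, i.e.\ $d < \tfrac{1}{2l}$, so the two conditions listed in the statement are the same; together they give $0 < h_j = d < h_i = \tfrac1l - d \le \tfrac1l \le 1$, so $h_i,h_j \in (0,1)$ (the value $h_i=1$ forces $l=1,\,d=0$, excluded, and $h_i=0$ forces $d=1/l$, excluded). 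Finally I would invoke Theorem~\ref{theo:main} for this distribution applied to the ordered pair $(j,i)$ instead of $(i,j)$: since $h_j < h_i \notin \{0,1\}$ and $h_j \neq 0$, the theorem gives $g_j/g_i < h_j/h_i$, equivalently $g_i/g_j > h_i/h_j$, which is the claim.

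The step that needs the most care is confirming Theorem~\ref{theo:main} applies verbatim here, since its proof is phrased for a distribution ``with $l$ nonzero output states'' whereas the perturbed distribution has $l+1$ nonzero states --- so the symbol ``$l$'' inside that proof must be reread as the size of the support of whatever distribution is fed to it, here $l+1$; this is a name collision, not a gap, and the term-by-term comparison driving the theorem is manifestly symmetric under swapping the two marked states. I would likewise double-check that strictness survives: the bound $h_j^{u-1}\le h_i^{u-1}$ is an equality at $u=1$, but at $u=m$ one has $\phi_{ij}^m(t,m) = f_{ij}^m(m,0) = 1 > 0$ together with $h_j^{m-1} < h_i^{m-1}$ whenever $m\ge 2$, so the inequality is strict for any nontrivial number of variants $m$ (and $g_i,g_j>0$ since $G_i(m,t)\ge h_i^{\,m}>0$, so the ratios are well defined). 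As an independent sanity check I would also redo the computation directly from $G_i(m,t) = \sum_{u=1}^m h_i^{\,u}\,\phi_{ij}^m(t,u)$ with $h_i = \tfrac1l - d$ and $h_j = d$ substituted; reproducing the same inequality would confirm there is no hidden dependence on the perturbation sitting on state $i$ specifically. I do not expect any genuine obstacle beyond this bookkeeping.
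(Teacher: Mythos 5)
Your proposal is correct and matches the paper's (implicit) derivation: the corollary is intended to follow directly from Theorem~\ref{theo:main} applied to the pair $(j,i)$, since $h_j=d<h_i=\tfrac1l-d$ with both in $(0,1)$ exactly when $0<d<\tfrac1{2l}$, giving $g_j/g_i<h_j/h_i$ and hence $g_i/g_j>h_i/h_j$. Your additional checks (equivalence of the two stated conditions, the name collision on $l$, and strictness via the $u=m$ term) are sound and only make explicit what the paper leaves unstated.
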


\begin{corollary}
If there is an imbalance $d$ between two states with probability $1/l$ so that $h_i=\frac{1}{l}-d$ and $h_j=\frac{1}{l}+d$, $\frac{g_i(m,t)}{g_j(m,t)}>\frac{h_i}{h_j}$ given that $h_i>h_j$ or that $0<d<\frac{1}{l}$.
\end{corollary}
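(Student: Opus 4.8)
The plan is to obtain this corollary as a one-step consequence of Theorem~\ref{theo:main}, using the symmetry between the two states together with a careful choice of which one to label $i$. The two states enter the aggregated probabilities only through the decomposition $G_i(m,t)=\sum_{u=1}^m h_i^u\,\phi_{ij}^m(t,u)$ from the proof of Theorem~\ref{theo:main}, and the weight $\phi_{ij}^m(t,u)$ is invariant under exchanging $i\leftrightarrow j$: it depends on the pair only through the symmetric product $h_ih_j$ and through the background sum $f_{ij}^m$, whose plurality constraint (all other counts strictly below the winning count) is unaffected by the swap. Consequently the \emph{same} weights $\phi_{ij}^m$ govern both $G_i$ and $G_j$, so Theorem~\ref{theo:main} may be invoked for whichever of the two probabilities is the smaller.

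First I would fix the orientation by reading the connective ``or'' as ``equivalently,'' exactly as in the preceding corollary, where ``$h_i>h_j$'' and the displayed range on $d$ name the same condition. For ``$h_i>h_j$'' and ``$0<d<1/l$'' to coincide here, state $i$ must be the \emph{more} probable one, i.e. $h_i=1/l+d$ and $h_j=1/l-d$ with $d>0$; I adopt this labeling. Applying Theorem~\ref{theo:main} to the ordered pair $(j,i)$ --- whose hypothesis $h_j<h_i$ is precisely our situation --- yields $g_j/g_i<h_j/h_i$, and inverting this inequality gives $g_i/g_j>h_i/h_j$, the claim. Thematically, this records that plurality voting \emph{widens} an imbalance between two states that ideally should be equiprobable, the very distortion that makes voting inappropriate for spread-out distributions.

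Next I would verify that Theorem~\ref{theo:main}'s hypotheses genuinely hold, which is where the range on $d$ does its work. The theorem excludes the endpoints $\{0,1\}$, so I must confirm $0<h_j<h_i<1$. With $h_j=1/l-d$ and $h_i=1/l+d$, the binding constraint is $h_j>0$, i.e. $d<1/l$, while $h_i<1$ is automatic for $l\ge2$ since then $1/l\le 1-1/l$. This is exactly the stated range $0<d<1/l$, so that condition serves only to keep both probabilities strictly interior and distinct.

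The main obstacle --- and the sole point demanding care --- is the sign and labeling bookkeeping, because the inequality's direction is sensitive to which state is larger. Under the literal parametrization $h_i=1/l-d$ with $d>0$ one would have $h_i<h_j$, and a direct application of Theorem~\ref{theo:main} to $(i,j)$ would produce the \emph{reversed} inequality $g_i/g_j<h_i/h_j$; the corollary is true only once $i$ is consistently taken to be the more probable state. The plan therefore hinges on pinning down that orientation and on confirming the exchange symmetry $\phi_{ij}^m=\phi_{ji}^m$ so that identical weights appear in $G_i$ and $G_j$. Once both are settled, the result is a single inversion of Theorem~\ref{theo:main}'s inequality, with no additional computation.
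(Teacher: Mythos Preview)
Your approach is exactly the paper's: the corollary is stated without a separate proof and is meant to follow immediately from Theorem~\ref{theo:main} by swapping the roles of $i$ and $j$ and inverting the inequality, just as you do. Your observation about the symmetry $\phi_{ij}^m=\phi_{ji}^m$ is precisely what the proof of Theorem~\ref{theo:main} already uses when writing $G_i/G_j$ with the \emph{same} weights $\phi_{ij}^m$ in numerator and denominator, so no new verification is needed there.

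You are also right that the statement, read literally, is internally inconsistent: with $h_i=\tfrac{1}{l}-d$ and $h_j=\tfrac{1}{l}+d$ one has $h_i>h_j\iff d<0$, which is disjoint from $0<d<\tfrac{1}{l}$, so the two clauses after ``given that'' cannot be synonymous as they are in the preceding corollary. Your resolution---relabel so that $i$ is the more probable state, making ``$h_i>h_j$'' and ``$0<d<1/l$'' equivalent---is the intended reading and the only one under which the displayed inequality $g_i/g_j>h_i/h_j$ follows from Theorem~\ref{theo:main}. It would be worth flagging explicitly in your write-up that the paper's parametrization contains a sign slip, since under the letters as printed the theorem actually yields $g_i/g_j<h_i/h_j$.
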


It follows from Theorem~\ref{theo:main} and its corollaries that plurality voting is a stable aggregation strategy for ideal output probabilities with $1\leq l \leq r$ equally probable outputs and $r-l$ zero frequencies.
If the non-zero output probabilities differ, the smaller ones get further reduced in the aggregated results, while the larger ones get amplified. This property helps to reduce the aggregated probabilities of zero-frequency outputs when they are erroneously measured.

\end{document}